\theoremstyle{definition}
\theoremstyle{plain}
\newtheorem{assumption}{Assumption}
\newtheorem{theorem}{Theorem}
\newtheorem{Remark}{Remark}[section]
\newcommand{\bRemark}{ \begin{Remark} \rm }
\newcommand{\eRemark}{ \end{Remark}    }
\newcommand{\hh}{\boldsymbol{h}}
\newcommand{\indi}{\mathbbm{1}}
\newcommand{\Scal}{\mathcal{S}}
\newcommand{\Mcal}{\mathcal{M}}
\newcommand{\Acal}{\mathcal{A}}
\newcommand{\Hcal}{\mathcal{H}}
\newcommand{\EE}{\mathbb{E}}
\newcommand{\PP}{\mathbb{P}}
\newcommand{\pistar}{{\pi^*}}
\newcommand{\Vstar}{V^*}
\newcommand{\pithresh}{\pi_{t,f}}
\title{\LARGE \bf \centering Optimal Control for Remote Patient Monitoring with \\ Multidimensional Health States


}
\author{Siddharth Chandak$^{1\dagger}$, Isha Thapa$^{2\dagger}$, Nicholas Bambos$^{1,2}$ and David Scheinker$^{2,3}$\\ 
$^{1}$ Department of Electrical Engineering, Stanford University, USA.\\
$^{2}$ Department of Management Science \& Engineering, Stanford University, USA.\\
$^{3}$ School of Medicine, Stanford University, USA.\\
{ \{chandaks, ishadt, bambos, dscheink\}@stanford.edu}
\thanks{$^\dagger$ Equal Contribution. Listed alphabetically.}
}
\begin{document}
\maketitle

\begin{abstract}

 Selecting the right monitoring level in Remote Patient Monitoring (RPM) systems for e-healthcare is crucial for balancing patient outcomes, various resources, and patient's quality of life. A prior work has used one-dimensional health representations, but patient health is inherently multidimensional and typically consists of many measurable physiological factors. In this paper, we introduce a multidimensional health state model within the RPM framework and use dynamic programming to study optimal monitoring strategies. Our analysis reveals that the optimal control is characterized by switching curves (for two-dimensional health states) or switching hyper-surfaces (in general): patients switch to intensive monitoring when health measurements cross a specific multidimensional surface. We further study how the optimal switching curve varies for different medical conditions and model parameters.
This finding of the optimal control structure provides actionable insights for clinicians and aids in resource planning. The tunable modeling framework enhances the applicability and effectiveness of RPM services across various medical conditions.

\end{abstract}


\section{Introduction}
Remote Patient Monitoring (RPM) is becoming an increasingly integral part of modern healthcare, enabling continuous observation of patients within their daily environments and enhancing both the quality of life and the level of delivered healthcare \cite{Farias, Malasinghe, Zinzuwadia}. Advances in wearable medical devices such as continuous glucose monitors (CGMs) \cite{Lee} and smartwatches which monitor vital signs \cite{Masoumian, BarDavid} among others have allowed for collection and transmission of health data for real-time analysis and timely medical interventions.

A significant challenge in RPM is determining the optimal intensity of monitoring. While intensive monitoring can lead to earlier detection and a prompt response, it can also cause patient stress and treatment fatigue due to its invasive nature \cite{heckman}, drain device battery faster \cite{BarDavid}, and increase other costs.
On the other hand, less intensive monitoring might not be sufficient for timely interventions. In a recent work \cite{chandak2024tiered}, we studied this trade-off using a one-dimensional health state model and showed that threshold policies, where patients switch to intensive monitoring when their health falls below a certain threshold, are optimal. However, in reality, a patient's health state is inherently multidimensional, where the dimensions represent different health measurements that clinicians use to make decisions \cite{Zinzuwadia,prahalad, senanayake, torous-suicide}.

Given the practical limitations of one-dimensional health states, in this paper, we extend our initial work \cite{chandak2024tiered} by introducing a multidimensional representation of health states within the RPM framework. We develop a dynamic programming approach to determine optimal monitoring strategies in this more complex setting. Our analysis reveals that controls characterized by \textit{switching curves} (in two dimensions) or \textit{hyper-surface} (in general) are optimal when considering multidimensional health states. That is, there exist switching curves/hyper-surfaces within the health space such that patients transition to intensive monitoring when their health indicators fall below this switching surface and return to ordinary monitoring when they improve. This finding provides clinicians with actionable guidelines for adjusting monitoring intensity based on a  multidimensional view of patient health.

The implications of the model and its analysis are significant for both patient care and resource management. By incorporating tunable parameters such as health improvement probabilities, monitoring options, and invasiveness costs, the framework can be adapted to specific medical conditions and tailored to individual patient needs. Furthermore, although not explicitly modeled, our approach allows for estimating the resources required to effectively manage patient populations, given the known time commitments associated with different levels of care. The rest of the paper is organized as follows. In Section II, we present the multidimensional RPM service model and describe the evolution of the patient's health state under different monitoring strategies. Section III discusses examples of critical health states and their impact on the optimal policy. Section IV provides numerical investigations of the optimal monitoring control, and Section V concludes the paper with potential extensions.

\section{The Remote Patient Monitoring Model} 
\label{sec:model}

Consider a patient whose health condition is modeled as an $n$-dimensional \textbf{health state}
$\hh=(h^{(1)},\ldots,h^{(n)})\in\Hcal\coloneqq \{0,1,\ldots,H\}^n$. At each time $t\in\{0,1,\ldots\}$, the patient's health state is given by $\hh_t\in\Hcal$. 
 The remote patient monitoring (RPM) service places the patient in a \textbf{monitoring state} $m_t\in\Mcal=\{o,i\}$ in each time period $t$ where $o$ denotes ordinary monitoring and $i$ 
 intensive monitoring.
Thus, one can view the joint
monitoring and patient state $s_t\coloneqq(m_t, \hh_t)\in\Mcal\times\Hcal\eqqcolon\Scal$ as the service state at time $t$.

The $n$ dimensions of the patient's health state correspond to different \textbf{health measurements}  monitored by the e-health service. For example, a program for Type 1 Diabetes management may have patients wear CGMs and examine multiple measurements. These measurements could include time in range (percentage of time glucose levels remained between 70 and 180mg/dl) and time with clinically significant hypoglycemia (percentage of time glucose levels were below 54mg/dl) \cite{prahalad}.
A higher patient health state corresponds to the patient having better health\footnote{In our model, continuous health measurements are discretized into clinically relevant categories. Additionally, in cases where increase in standard health measurement corresponds to worsening health, we use the inverse measurement instead. For example, the standard measurement is time with clinically significant hypoglycemia, but for our model we would use time without clinically significant hypoglycemia.}. Here `higher' is defined component-wise, i.e., for two health states $\hh$ and $\hh'$, with $h^{(j)} = h'^{(j)}$ for all $j\neq k$ and $h^{(k)}>h'^{(k)}$, the patient is said to have better health in state $\hh$ than in state $\hh'$. 

In particular, there exist {\bf critical} health states in the sense that, when the patient drifts into those `worse' states, they go beyond the scope of the current e-health service; and other emergency and/or more severe medical interventions are required, which are outside the scope of this service. For example, a person with diabetes experiencing a severe hypoglycemic episode may have to go to the emergency department or hospital \cite{mccoy-hypo}.  When the patient enters a critical health state under any monitoring state $i$ or $o$, the service evolution stops, as other medical measures/interventions are initiated. We denote the set of these critical health states by $\Hcal_C$. A simple example of these critical sets could just be the origin or a hypercube around origin. We discuss more such critical sets, with their medical relevance, in the next section. The origin $(0,\ldots,0)$ is always a part of $\Hcal_C$.

An advantage of our model is that the costs can be interpreted from multiple perspectives. In this section, we define the various costs based on the 
the patient's quality of life. Under ordinary monitoring, the patient incurs a constant cost $C_o\geq0$ at any state $(o,h)$ with $\hh\in\Hcal$. Similarly, under intensive monitoring, the patient incurs a constant cost $C_i\geq0$ at any state $(i,h)$ with $\hh\in\Hcal$. These costs reflect the invasiveness of the monitoring process to the patient's everyday lifestyle and quality of life. Of special interest are the critical health states, where  a cost of $C_c$ is incurred, and the model ceases to apply.

\subsection{Markov Evolution}
We model the system as a controlled Markov chain. Such models are often used for medical decision making \cite{alagoz}. We try to stay as simple as possible, while still capturing the essence of the problem and get insights into its solution. At the beginning of every time period $t$, the service takes the decision (action/control) to either keep the monitoring state the same as the previous time period or to switch to the other monitoring state. Formally, the action space is $\Acal=\{o,i\}$ and each (state, action) pair is associated with a cost given by the function $c:\Scal\times\Acal\mapsto\mathbb{R}^+$. The transition probabilities are given by $p(s'|s,a)$ where $s',s\in\Scal$ and $a\in\Acal$. 

We explain the evolution in the two-dimensional plane (i.e., $n=2$). This helps us simplify the notation and better visualize the optimal control (Section \ref{sec:opti-control}). The health state is denoted as $\hh=(h^{(x)},h^{(y)})$ and the system state is given by $(m,h^{(x)},h^{(y)})$, where $m\in\Mcal$. The definitions can easily be extended to higher dimensions. The cost function and transition probabilities are given below. For simplicity, we only discuss the states lying inside (and not on) the boundary here. The complete transition probabilities can be understood using Figure \ref{fig:Markov-transitions} and have been defined in Appendix \ref{app:boundary}.

\noindent\textbf{1. At critical health states $\hh\in\Hcal_C$ ---} 
\begin{quote}
    No action is taken with the service ceasing operation. A cost of $C_c$ is incurred.
\end{quote}

\noindent \textbf{2. When $\hh\notin\Hcal_C$ and $1\leq h^{(x)},h^{(y)}<H$ ---}
\begin{enumerate}[label=(\alph*)]
\item {\em Ordinary Monitoring $(m=o)$, no Switching $(a=o)$:} \\
Does not induce a monitoring change. Starting at state $(o,h^{(x)},h^{(y)})$, the next state is
\begin{enumerate}[label=\roman*)]
\item 
$(o,h^{(x)}+1,h^{(y)})$ with probability $\lambda_{o,x}$,
\item
$(o,h^{(x)},h^{(y)}+1)$ with probability $\lambda_{o,y}$,
\item $(o,h^{(x)}-1,h^{(y)})$ with probability $\mu_{o,x}$,
\item $(o,h^{(x)},h^{(y)}-1)$ with probability $\mu_{o,y}$,
\end{enumerate}
and a cost $C_o$ is incurred. 
\item {\em Intensive Monitoring $(m=i)$, no Switching $(a=i)$:}\\
Does not induce a monitoring change. Starting at state $(i,h^{(x)},h^{(y)})$, the next state is:
\begin{enumerate}[label=\roman*)]
\item 
$(i,h^{(x)}+1,h^{(y)})$ with probability $\lambda_{i,x}$,
\item
$(i,h^{(x)},h^{(y)}+1)$ with probability $\lambda_{i,y}$,
\item $(i,h^{(x)}-1,h^{(y)})$ with probability $\mu_{i,x}$,
\item $(i,h^{(x)},h^{(y)}-1)$ with probability $\mu_{i,y}$,
\end{enumerate}
and a cost $C_i$ is incurred.

\item {\em Intensive Monitoring $(m=i)$, with Switching $(a=o)$:}\\
Induces a switch to ordinary monitoring. The next state, respective transition probabilities, and the cost incurred are same as part (a): ordinary monitoring ($m=o$) with no switching $(a=o)$.

\item {\em Ordinary Monitoring $(m=o)$, with Switching $(a=i)$:}\\
Induces a switch to intensive monitoring. The next state, respective transition probabilities, and the cost incurred are same as part (b): intensive monitoring ($m=i$) with no switching $(a=i)$.
\end{enumerate}
Here $\lambda_{o,x}+\lambda_{o,y}+\mu_{o,x}+\mu_{o,y}=1$ and $\lambda_{i,x}+\lambda_{i,y}+\mu_{i,x}+\mu_{i,y}=1$ We assume that the decision to switch (or not) is made at the beginning of the time period. As a result, the transition probabilities and costs are decided solely by the next monitoring state. We make the following assumptions.
\begin{assumption} The transition probabilities and costs satisfy:
\begin{enumerate}[label=\alph*)]
    \item  $\lambda_{i,x}\geq\lambda_{o,x}$ and $\lambda_{i,y}\geq\lambda_{o,y}$.
    \item  $0\leq C_o\leq C_i\leq C_c$.
\end{enumerate}
\end{assumption}
Both of these assumptions are natural in practice. The first assumption 1.a) intuitively states that in comparison to ordinary monitoring, the patient's health improves faster under intensive monitoring. Assumption 1.b) is also expected as the patient has a lower quality of life under intensive monitoring. Further, given the severity of reaching a critical state, it is expected that $C_c$ is {\em much larger} than $C_i$ ($C_c\gg C_i$).

\begin{figure}[h!]
\centering
\includegraphics[width=0.7\linewidth]{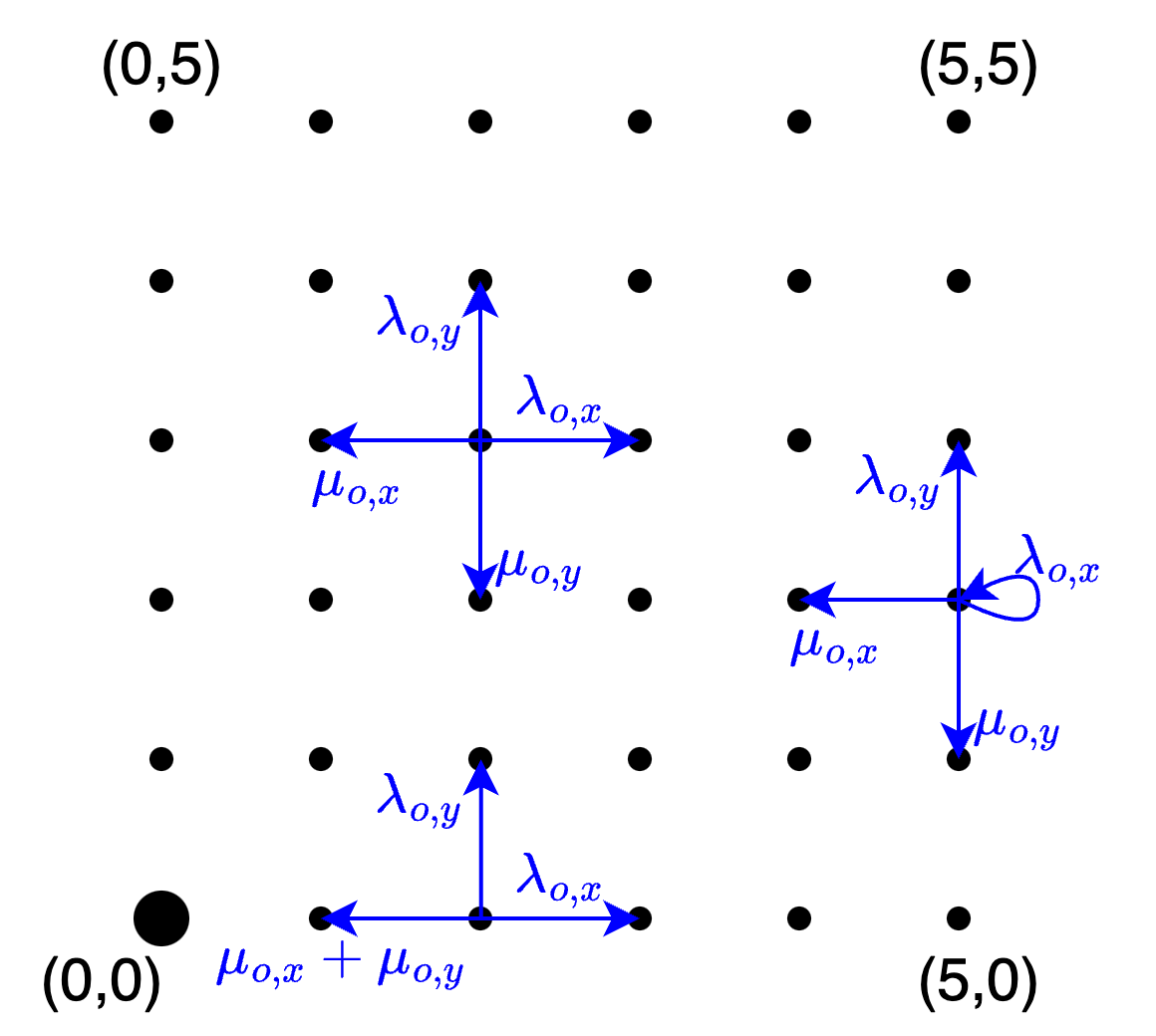}
\caption{The evolution for a 2-dimensional model with $H=5$ and origin as the critical health state. The transitions are marked for states $(o,2,0), (o,5,2)$, and $(o,2,3)$ under action $o$. The transaction probabilities have been labeled on the arrows, and the cost $C_o$ is incurred in each case.}
  \label{fig:Markov-transitions}
\end{figure}

There can be multiple generalizations for the above model. We can incorporate state-dependent costs and probabilities, and study tiers of progressively more intensive monitoring. Due to limited space, we assume constant costs, probabilities, etc., and explore the solution under these simplifications.

\subsection{Optimal Monitoring Control}
We study this problem under the {\em discounted cost} setting of the dynamic programming methodology \cite{Bertsekas}.  Costs incurred $t$ time periods into the future are discounted by a factor of $\gamma^t$ with $0<\gamma<1$. Starting from state $s_0=s\in\Mcal\times\Hcal$, the total expected discounted cost incurred is
\begin{equation*}
\EE\Big[\sum_{t=0}^{T-1} \gamma^tc(s_t,a_t)+\gamma^TC_c \Big | \ s_o=s\Big].    
\end{equation*}
At each time $t$, control $a_t$ is taken while in state $s_t=(m_t,\hh_t)$, and a cost of $c(s_t,a_t)$ is incurred. The service ceases operation when the patient reaches a critical state at time $T$, and the critical cost of $C_c$ is incurred, discounted to $\gamma^TC_c$. Thus, discounting by $\gamma$ implicitly reflects the patient's desire to stay away from the critical health states for longer.

A stationary monitoring policy or control $\pi$ is a mapping from each state $s\in\Scal$ to a control $\pi(s)\in\Acal$ taken at that state. The value function $V_\pi(s)$ for policy $\pi$ is the total expected cost incurred when starting at state $s$ and where the control at each step is decided using policy $\pi$, i.e., 
$$
V_\pi(s)=\EE\left[\sum_{t=0}^{T-1}
\gamma^t c\Big(s_t,\pi(s_t)\Big)
+\gamma^TC_c\ \Big |\ s_0=s\right].
$$
$V_{\pi}(s)$ satisfies the following dynamic programming equation \cite{Bertsekas} for all $s\in\Scal$.
$$
V_\pi(s)=c\Big(s,\pi(s)\Big)+
\gamma
\sum_{s'\in\Scal} \PP\Big(s'\mid s,\pi(s)\Big) V_\pi(s').
$$

Our goal is to find the optimal policy $\pi^*$ which minimizes $V_\pi$ over all policies $\pi$, i.e., $V_{\pi^*}(s) \le V_\pi(s)$ for every $s\in\Scal$ over all policies $\pi$. We define $V^*(s)\coloneqq V_{\pi^*}(s)$ which satisfies the following dynamic programming equation \cite{Bertsekas}.
$$
\Vstar(s)=\min_{a\in\{o,i\}} \Bigg\{c(s,a)+\gamma
\sum_{s'\in\Scal}\PP\Big(s'|s,a\Big)\Vstar(s')\Bigg\},
$$
and can be solved numerically to yield the optimal policy $\pistar$, which denotes the mapping from each state $s$ to the optimal decision at that state $s$. For the two-dimensional model defined previously, the above equation unfolds into:

\begin{enumerate}
[align=left, 
leftmargin=0pt, 
labelindent=\parindent, listparindent=\parindent, 
labelwidth=0pt, 
itemindent=!,
label=(\roman*)]
\item At critical health states $\hh\in\Hcal_C$:
\begin{equation*}
\Vstar(i,\hh)=\Vstar(o,\hh)=C_c.
\end{equation*}
\item 
For health states $\hh\notin\Hcal_C$, recall that the transition probabilities and cost incurred for a given time step is  decided by the action taken and the new monitoring state (and not the current one). This results in $\Vstar(i,\hh)$ and $\Vstar(o,\hh)$ being equal. For
$1\leq h^{(x)},h^{(y)}<H$, 
\begin{align}\label{dp-eqn}
    &\Vstar(i,\hh)=\Vstar(o,\hh)\nonumber\\
    &=\min\Bigg\{C_i+\gamma\bigg[\lambda_{i,x} \Vstar\left(i,h^{(x)}+1,h^{(y)}\right)\nonumber\\
    &\;\;+\lambda_{i,y}\Vstar\left(i,h^{(x)},h^{(y)}+1\right)+\mu_{i,x} \Vstar\left(i,h^{(x)}-1,h^{(y)}\right)\nonumber\\
    &\;\;+\mu_{i,y}\Vstar\left(i,h^{(x)},h^{(y)}-1\right)\bigg],\nonumber\\
    &\;\;C_o+\gamma\bigg[\lambda_{o,x} \Vstar\left(i,h^{(x)}+1,h^{(y)}\right)\nonumber\\
    &\;\;+\lambda_{o,y}\Vstar\left(i,h^{(x)},h^{(y)}+1\right)+\mu_{o,x} \Vstar\left(i,h^{(x)}-1,h^{(y)}\right)\nonumber\\
    &\;\;+\mu_{o,y}\Vstar\left(i,h^{(x)},h^{(y)}-1\right)\bigg]\Bigg\}
\end{align}
\end{enumerate}

For clarity, we focus on the states lying inside (and not on) the boundary in the above dynamic programming equation. The complete equation has been given in Appendix \ref{app:boundary}.

In the $\min\{\cdot,\cdot\}$ above, the first term stems from the control $i$, i.e., the patient is in intensive monitoring in the next timestep, and the second term stems from the control $o$.

\section{Critical Health States}
Critical health states are states where the patient's health has deteriorated significantly and the patient's health care is no longer within the scope of the current service model. Instead, other (emergency and more severe) medical interventions are required at that point and, hence, a large cost $C_c$ is incurred when the patient reaches these states. 

In our earlier model \cite{chandak2024tiered}, where the health states were one-dimensional, the only critical health state was the state $h=0$. But the set of critical health states $\Hcal_C$ can have diverse structures when the health states are multidimensional. As we discuss below, this set depends on the medical condition this model is applied to and is closely related to how different health measurements (dimensions of our model) interact. Choosing an appropriate set of critical health states is vital, as the optimal control is heavily influenced by this set. We discuss the optimal control further in the next section.

We consider $\Hcal_C$ of the form $\Hcal_C=\{\hh\mid g(\hh)\leq c\}$, where $g:\Hcal\mapsto\mathbb{R}$ and $c\in\mathbb{R}$. Intuitively, a health state is critical when the patient's health falls below a certain threshold, defined by a function of the health states $g(\cdot)$. The following are some sets which we consider in this short paper. In Figure \ref{fig:opti}, the critical sets are marked with larger black dots. 

\begin{enumerate}[label=(\alph*)]
    \item \textbf{Any health measurement is very low}: In this case, the patient's health is considered critical if even a single measurement becomes `very low' (normalized to 0 in our paper). For example, a person with diabetes experiencing a severe hypoglycemic episode would be considered to be in a critical state \cite{mccoy-hypo} even if they otherwise had a high time in range. Mathematically this can be represented as the set $\Hcal_C=\{\hh\mid \min_{1\leq i\leq n} \{h^{(i)}\}=0\}$. In the two-dimensional setting, this set would be the states along the $x$ and $y$ axes (Figure \ref{fig:opti-axes}).
    \item \textbf{Health measurements are collectively low}: A patient's health is considered critical in this case when the measurements are together  low. For example, when considering patients with severe cardiovascular disease, deterioration in cardiac function would be indicated by a combination of variables. While increased resting heart rate and decreased physical activity might not indicate severe risk separately, their combination with weight gain from water retention might suggest critical health \cite{Zinzuwadia, senanayake}. 
    
    This case can have multiple structures, but we consider the following two sets which correspond to the $\ell_1$ and $\ell_{\infty}$ norm, respectively - $\Hcal_C=\{\hh\mid \sum_{1\leq i\leq n} h^{(i)}\leq c\}$ and $\Hcal_C=\{\hh\mid \max_{1\leq i\leq n} \{h^{(i)}\}\leq c\}$. In the two-dimensional setting, these correspond to a triangle (Figure \ref{fig:opti-triangle}) and a square (Figure \ref{fig:opti-square}) cornered at the origin, respectively. Note that a critical set consisting of only the origin is a special case of this set.
    
    \item \textbf{Combination of the above sets}: A critical set applicable to a wide range of medical conditions is a combination of the above two sets. An example of such a set could be $\Hcal_C=\{\hh\mid \min_{1\leq i\leq n} \{h^{(i)}\}=0\ \textrm{or} \ \sum_{1\leq i\leq n} h^{(i)}\leq c\}$. In the two-dimensional setting, this set would include the axes, along with a triangle cornered at origin (Figure \ref{fig:opti-combi}). For example, a critical mental health state (such as high risk of suicide) may be indicated by several risk factors, and represented by such a critical health state \cite{batra, torous-suicide}.
\end{enumerate}
Note that these are just a few examples of the critical sets that can be represented using our general model. As we move to higher dimensions, there can be more complex sets which can be used to model medical conditions with a higher number of measurement types.

\section{Optimal control: The Switching Curve}\label{sec:opti-control}

\begin{figure*}[h!]
\centering
\begin{subfigure}[t]{0.24\textwidth}
  \centering
  \includegraphics[width=0.9\linewidth]{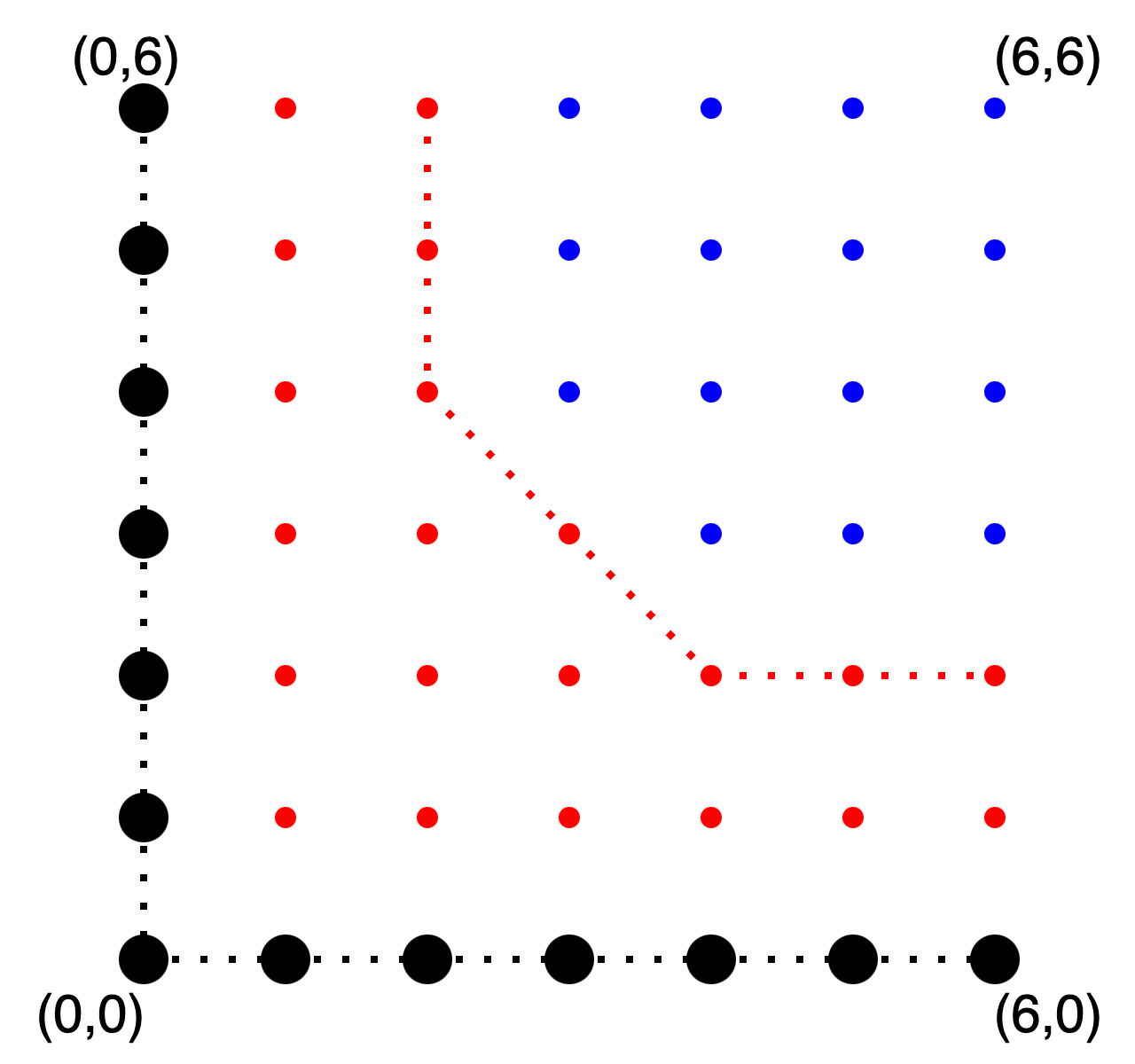}
  \caption{\centering The critical set comprises of the states on x- and y-axis.}
  \label{fig:opti-axes}
\end{subfigure}%
\hfill
\begin{subfigure}[t]{.24\textwidth}
  \centering
  \includegraphics[width=.9\linewidth]{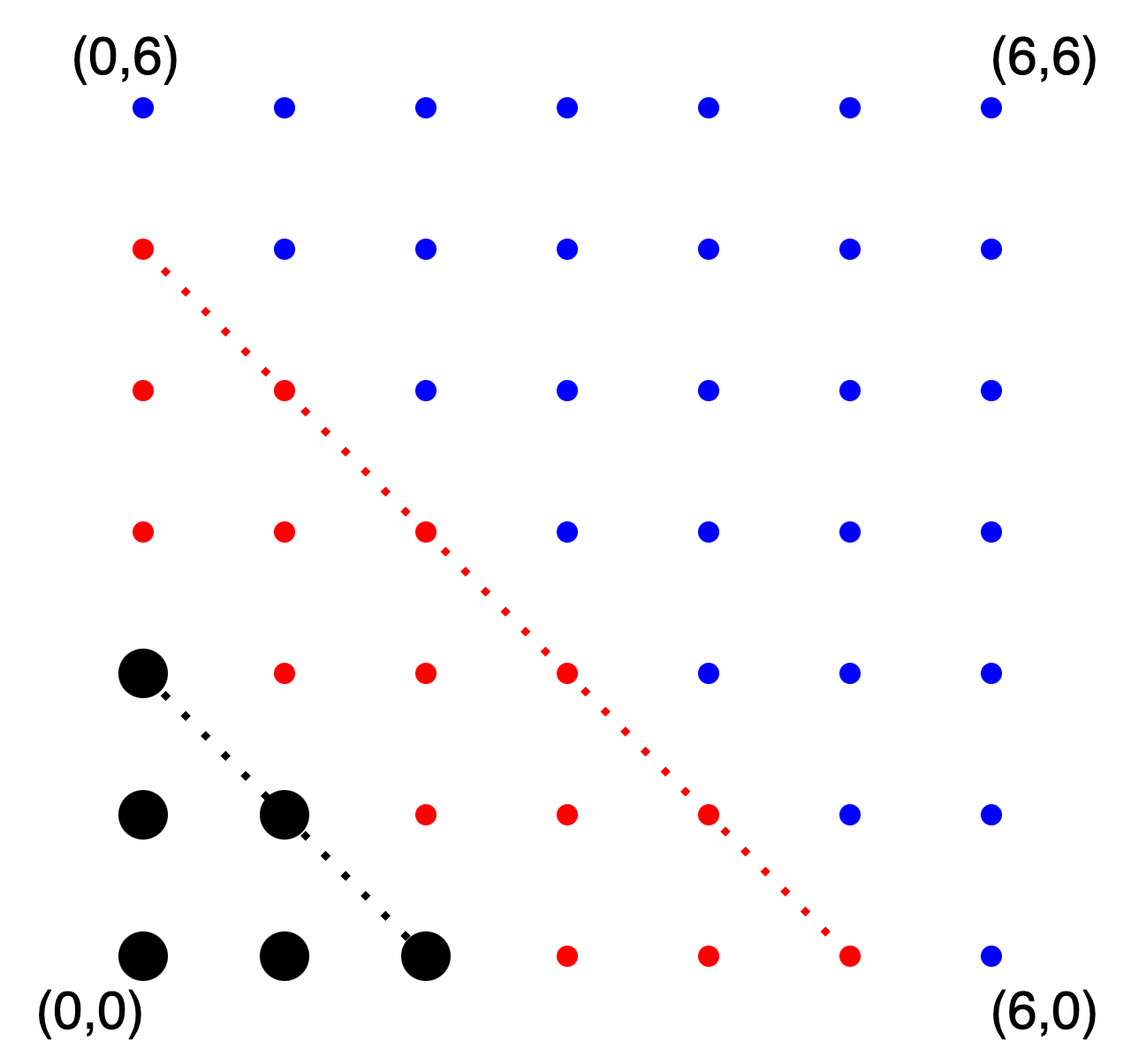}
  \caption{\centering The critical set comprises of states in a triangle around origin.}
  \label{fig:opti-triangle}
\end{subfigure}
\hfill
\begin{subfigure}[t]{.24\textwidth}
  \centering
  \includegraphics[width=.9\linewidth]{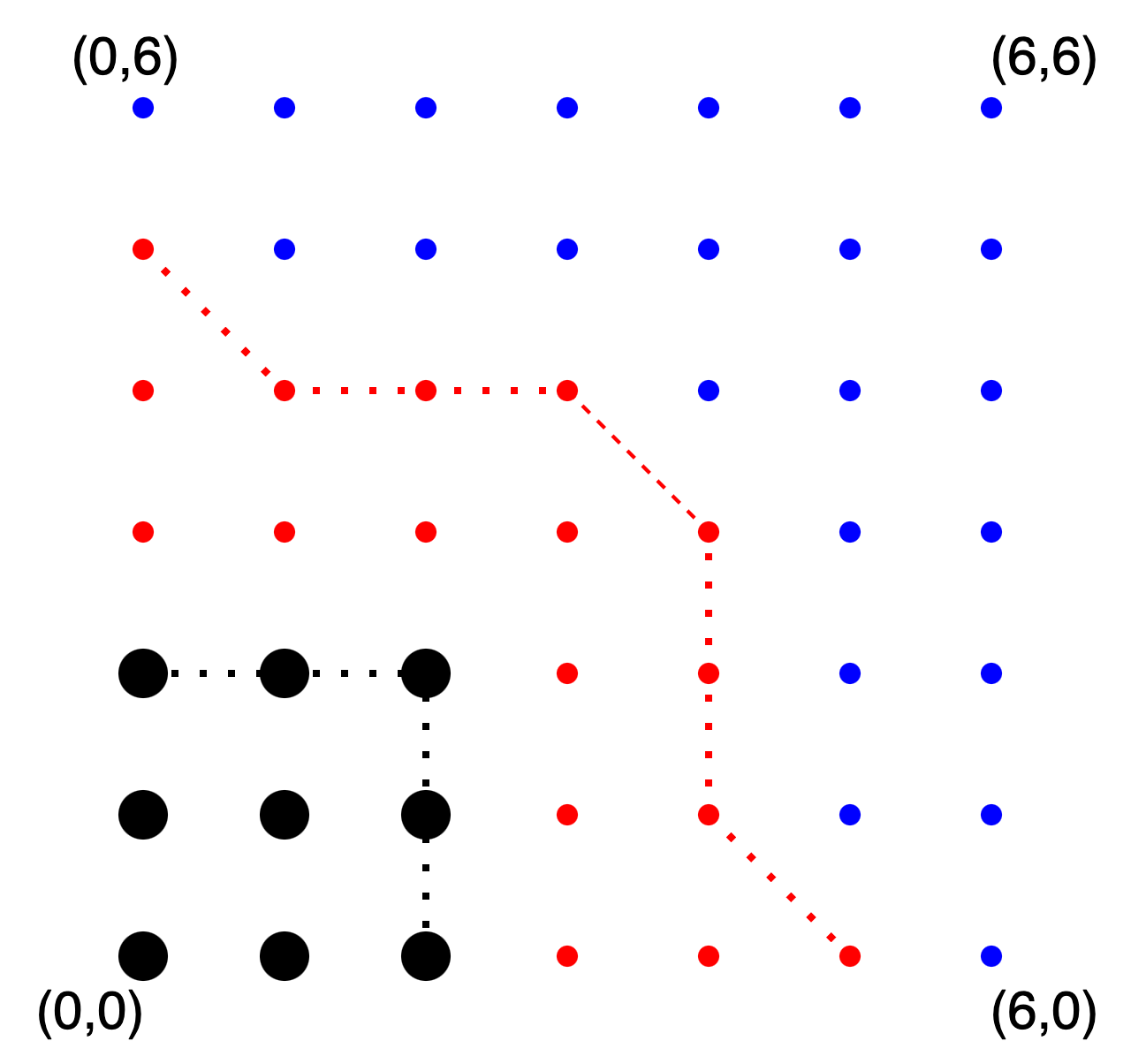}
  \caption{\centering The critical set comprises of states in a square around origin.}
  \label{fig:opti-square}
\end{subfigure}
\hfill
\begin{subfigure}[t]{.24\textwidth}
  \centering
  \includegraphics[width=.9\linewidth]{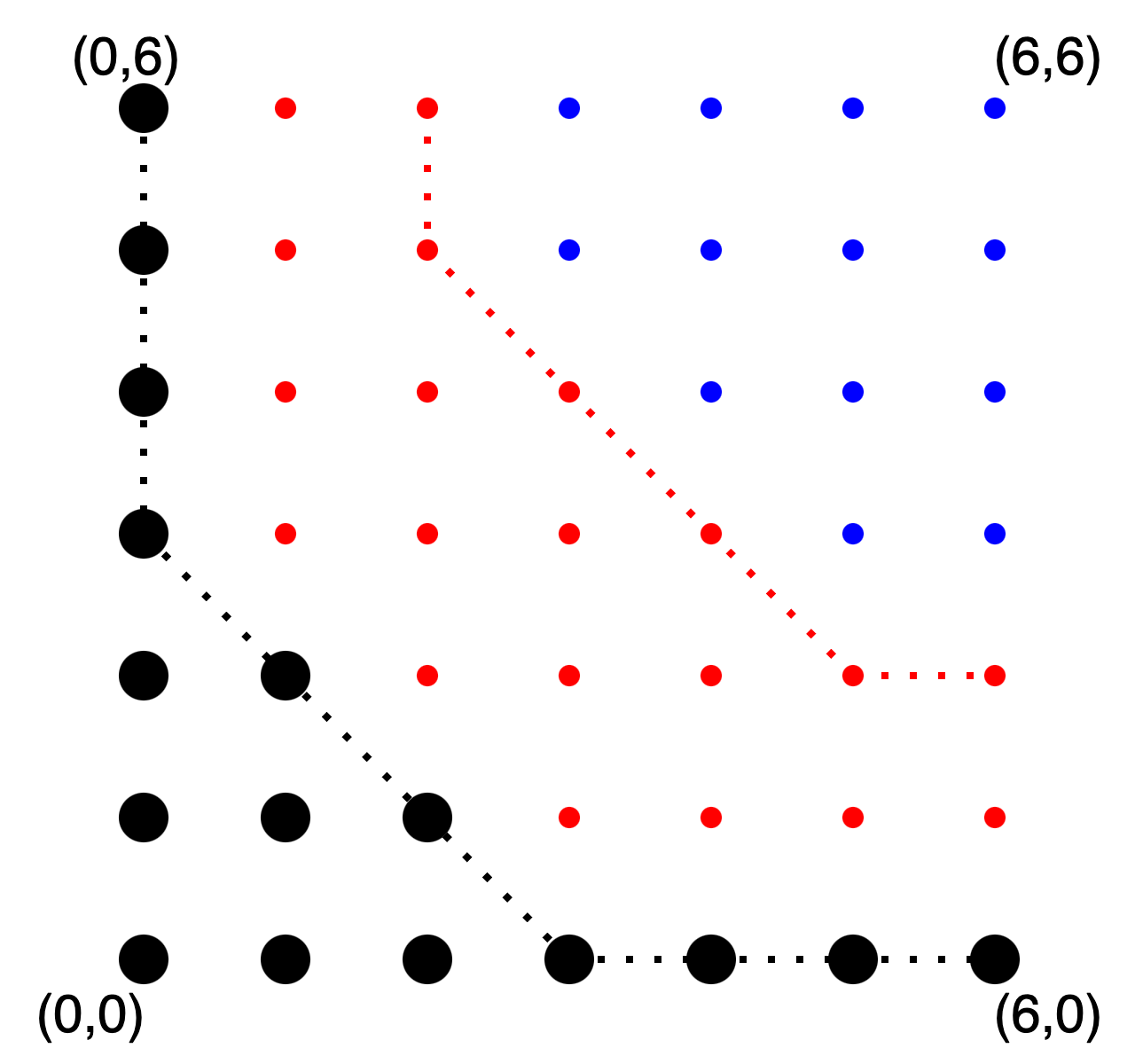}
  \caption{\centering The critical set comprises of union of sets in (a) and (b).}
  \label{fig:opti-combi}
\end{subfigure}
\caption{Optimal controls for a two-dimensional model with $H=6$ for critical sets given by (a) $h^{(x)}=0$ or $h^{(y)}=0$, (b) $h^{(x)}+h^{(y)}\leq 2$, (c) $\max\{h^{(x)},h^{(y)}\}\leq 2$, and (d) $h^{(x)}=0$ or $h^{(y)}=0$ or $h^{(x)}+h^{(y)}\leq 2$. Larger black dots represent critical health states, and dotted black line represents the boundary of the critical set. Intensive monitoring is optimal for states marked as red, and ordinary monitoring is optimal for states marked as blue. The dotted red line represents the {\it switching curve} $f(\hh)=0$. For each of the plots: $\gamma=0.9, C_i=1, C_o=0, C_c=35, \lambda_{o,x}=\lambda_{o,y}=0.5-\mu_{o,x}=0.5-\mu_{o,y}=0.075$ and $\lambda_{i,x}=\lambda_{i,y}=0.5-\mu_{i,x}=0.5-\mu_{i,y}=0.2$.}
\label{fig:opti}
\end{figure*} 

In this section, we discuss the optimal monitoring strategy for various critical health sets and model parameters. We use value iteration \cite{Bertsekas} to solve the dynamic programming equation \eqref{dp-eqn} and obtain the optimal policy. While we discuss some analytical results to help build intuition, various theoretical results are not discussed, given the limited space of this short paper. We postpone these results to a future work.

We first make an important observation about the optimal policy. We observe that $\Vstar(i,\hh)=\Vstar(o,\hh)$ for all health states $\hh$. This implies that $\pistar(o,\hh)=\pistar(i,\hh)$. Hence under the optimal control, the service chooses the same monitoring  irrespective of the current monitoring state.
Therefore, we introduce the notation $\pistar(\hh)=\pistar(i,\hh)=\pistar(o,\hh)$ denoting the optimal monitoring at state $\hh$. For example, suppose that $\pistar(\hh)=i$ for some health state $\hh$. If the current state is $(o,\hh)$, then the optimal control is to switch to intensive monitoring, and if the current state is $(i,\hh)$, then the optimal control is to stay in intensive monitoring. 

The above observation helps us better visualize the optimal control. In this section, we visualize the optimal control for a 2-dimensional health state model using a 2D grid. In Figures \ref{fig:opti} and \ref{fig:asymmetry}, the critical health states are marked with larger black dots. Ordinary and intensive monitoring are optimal for health sets marked with \textit{blue} and \textit{red}, respectively. 

Before moving to optimal control for specific instances, we make an important numerical observation. For our class of $\Hcal_C=\{\hh\mid g(\hh)\leq c\}$, we observe that the optimal control is a threshold policy $\pithresh$, characterized by function $f:\Hcal\mapsto\mathbb{R}$. Then $\pithresh(\hh)=i$ if $f(\hh)\leq 0$, and $\pithresh(\hh)=o$ if $f(\hh)>0$. For example, consider Figure \ref{fig:opti-triangle}. In that case,  $f(\hh)=h^{(x)}+h^{(y)}-5$ and hence the optimal control is $\pistar(\hh)=i$ if $h^{(x)}+h^{(y)}\leq 5$ and $o$ otherwise. A lower value for $f(\hh)$ implies that the patient is `closer' to the critical set, and hence intensive monitoring may be preferred for state $\hh$.  

The curve $f(\hh)=0$ is called the {\bf switching curve}. The monitoring control switches from ordinary to intensive as the patient's health state falls below the switching curve. The switching curve is seen in Figures \ref{fig:opti} and \ref{fig:asymmetry} as the red dotted boundary between the region of blue dots and the region of red dots. Note that in higher dimensions, the switching curve will be replaced by a switching hypersurface.

\subsection{Impact of Critical Health States}

We first discuss how the previously discussed  critical health sets affect the optimal policy. In Figure \ref{fig:opti}, we consider `symmetric' sets and parameters, i.e., the critical sets are symmetric in the two health measurements (dimensions) and the probabilities satisfy $\lambda_{o,x}=\lambda_{o,y}=0.5-\mu_{o,x}=0.5-\mu_{o,y}$ and $\lambda_{i,x}=\lambda_{i,y}=0.5-\mu_{i,x}=0.5-\mu_{i,y}$. 

The first case we discuss is the set where all points on the axes are considered critical states. Figure \ref{fig:opti-axes} shows the optimal control for this set.  Closer to the origin, we observe that the optimal control assigns intensive monitoring to more health states, as the patient is at higher risk of reaching either axis (and hence a critical state). Further away from the origin, the patient's risk of reaching a critical state can be characterized by its distance to just the closer axis, and hence the switching curve is parallel to the axis.

This can also be understood using the connection between hitting times and optimal policy studied for the one-dimensional system in \cite{chandak2024tiered}. Under the assumption of $H\uparrow \infty$, we showed that the function $\EE[\gamma^{\tau(h)}]$ is closely related to the optimal policy, where $\tau(h)$ denotes the hitting time for the critical health state starting at state $h$. In the present case, the time taken to hit a critical state is given by $\min\{\tau_x(\hh),\tau_y(\hh)\}$ where $\tau_x(\hh)$ and $\tau_y(\hh)$ denote the time taken to hit the x and the y-axis, respectively. We numerically observe that the level sets for $\EE[\gamma^{\min\{\tau_x(\hh),\tau_y(\hh)\}}]$ have a structure very similar to the optimal control we observe.

We next study the optimal control when the set of critical health states is a triangle cornered at origin, given by $h^{(x)}+h^{(y)}\leq c$. In this case (Figure \ref{fig:opti-triangle}), we observe that the switching curve has the same form, i.e., the patients are in intensive monitoring when $h^{(x)}+h^{(y)}\leq k$, for some $k$. In this case, the hitting time for the critical set (starting at state $\hh$) is a function of $(h^{(x)}+h^{(y)}-c)$ and hence the level sets are of the form $h^{(x)}+h^{(y)}=k$. We have the following result for the asymptotic case which shows that the optimal control in this case will always be of this form. 
\begin{theorem}\label{thm:triangle}
    Consider critical sets of the form $\Hcal_C=\{\hh\mid h^{(x)}+h^{(y)}\leq c\}$. Then, under the assumption that $H\uparrow\infty$ and for sufficiently small $\gamma$, the optimal control is a threshold policy $\pithresh$, where the switching curve is $f(\hh)=h^{(x)}+h^{(y)}-k$ for some $k\geq c$.
\end{theorem}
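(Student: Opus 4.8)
The plan is to collapse the two-dimensional chain onto its \emph{sum coordinate} $S\coloneqq h^{(x)}+h^{(y)}$ and then run the one-dimensional threshold argument from \cite{chandak2024tiered} on the resulting birth--death problem. First I would note that, for an interior non-critical state, the two ``up'' moves change $S$ by $+1$ with total probability $\Lambda_a\coloneqq\lambda_{a,x}+\lambda_{a,y}$ and the two ``down'' moves change $S$ by $-1$ with total probability $M_a\coloneqq\mu_{a,x}+\mu_{a,y}=1-\Lambda_a$, for $a\in\{o,i\}$. Since $\Hcal_C=\{\hh\mid S\le c\}$ and all costs $C_o,C_i,C_c$ are state-independent, the Bellman operator $T$ maps the cone of functions of the form $V(\hh)=v(S)$ into itself: for non-critical interior $\hh$, $(TV)(\hh)=\min_{a}\{C_a+\gamma[\Lambda_a v(S+1)+M_a v(S-1)]\}$, again a function of $S$ alone. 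Running value iteration from $V_0\equiv C_c$ (which has this form) and taking the limit yields $\Vstar(\hh)=v^*(S)$, where $v^*(S)=C_c$ for $S\le c$ and $v^*(S)=\min_a\{C_a+\gamma[\Lambda_a v^*(S+1)+M_a v^*(S-1)]\}$ for $S>c$. The assumption $H\uparrow\infty$ removes the outer boundary, leaving only the axes $h^{(j)}=0$ as the place where this projection needs to be checked against the boundary dynamics of Appendix~\ref{app:boundary}.

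Next I would establish that $v^*$ is non-increasing and convex on $\{c,c+1,\dots\}$. Non-increasingness follows from a monotone coupling of the two sum-chains, using $C_c\ge C_i\ge C_o$: being closer to the critical band can only cost more. For convexity I would argue by induction on value iteration that $T$ preserves \textbf{non-increasing convex} functions: writing $g_a(S)\coloneqq C_a+\gamma[\Lambda_a v(S+1)+M_a v(S-1)]$, each $g_a$ is convex, and
\[
g_o(S)-g_i(S)=(C_o-C_i)+\gamma(\Lambda_i-\Lambda_o)\bigl(v(S-1)-v(S+1)\bigr)
\]
is \emph{non-increasing} in $S$ because $v$ is convex and non-increasing and $\Lambda_i\ge\Lambda_o$ (Assumption~1.a). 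Hence $\min(g_o,g_i)$ equals $g_i$ on a down-set $\{S\le k\}$ and $g_o$ on its complement, and a slope check at the junction $S=k$ shows this pointwise minimum is again convex. Passing to the limit gives convexity of $v^*$.

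The conclusion then drops out. The one-step comparison says intensive monitoring is optimal at a non-critical $S$ iff $C_i-C_o\le\gamma(\Lambda_i-\Lambda_o)\bigl(v^*(S-1)-v^*(S+1)\bigr)$; convexity of $v^*$ makes the right-hand side non-increasing in $S$, so the set of $S$ at which intensive is optimal is a down-set $\{S\le k\}$ (empty, with $k=c$, if $\Lambda_i=\Lambda_o$). Pulling this back through the projection, $\pistar(\hh)=i$ precisely when $h^{(x)}+h^{(y)}\le k$, i.e.\ the optimal control is the threshold policy $\pithresh$ with switching curve $f(\hh)=h^{(x)}+h^{(y)}-k$; and $k\ge c$ since every state with $S\le c$ is critical, so the threshold cannot sit below the critical boundary.

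I expect two steps to be the real work. The first is making the projection exact at the \emph{axes}: I must verify that the Appendix~\ref{app:boundary} boundary rule along $h^{(j)}=0$ does not destroy sum-measurability of $\Vstar$ --- this is where $H\uparrow\infty$ and, crucially, ``$\gamma$ sufficiently small'' enter, the latter guaranteeing that any boundary-induced discrepancy is dominated by the one-step cost gap $C_i-C_o$, so the optimal action on those states still agrees with the projected policy (in this small-$\gamma$ regime the threshold typically collapses to the degenerate value $k=c$). The second obstacle is the convexity-preservation step itself: a minimum of convex functions need not be convex, and the argument goes through only because $g_o-g_i$ is monotone, which is exactly the structural content of Assumption~1.a.
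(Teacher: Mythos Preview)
Your reduction to the sum coordinate $S=h^{(x)}+h^{(y)}$ is exactly the paper's proof: the authors also collapse the two-dimensional chain onto the sets $A^{(k)}=\{\hh:h^{(x)}+h^{(y)}=k\}$, obtain the aggregate up/down probabilities $\lambda'_a=\lambda_{a,x}+\lambda_{a,y}$, and then simply \emph{invoke} Theorems~1 and~2 of \cite{chandak2024tiered} for the resulting one-dimensional model. That is the whole proof; no convexity argument is carried out. Your detour through monotonicity/convexity of $v^*$ is therefore a genuinely different (and more self-contained) route to the one-dimensional threshold conclusion.

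Two concrete corrections. First, the axis boundary is \emph{not} where ``$\gamma$ sufficiently small'' enters. If you compute the Appendix~\ref{app:boundary} transitions at $h^{(x)}=0$, $h^{(y)}>c$, the downward mass is $\mu_{a,y}\indi_{\{h^{(y)}\neq0\}}+\mu_{a,x}\indi_{\{h^{(x)}=0\}}=\mu_{a,x}+\mu_{a,y}$, all of it sending $S\to S-1$; the upward mass is still $\lambda_{a,x}+\lambda_{a,y}$. So the sum coordinate obeys the \emph{same} birth--death law on the axes as in the interior, the projection is exact, and your ``first obstacle'' disappears. The small-$\gamma$ hypothesis is inherited from the cited one-dimensional theorems themselves, not from any boundary patching.

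Second, your convexity-preservation step is weaker than you state. Monotonicity of $g_o-g_i$ guarantees a single crossing, but a single crossing of two convex functions does \emph{not} make their pointwise minimum convex; at the junction $S=k$ one needs $g_i(k-1)+g_o(k+1)\ge 2g_i(k)$, and since $g_o(k+1)\le g_i(k+1)$ this is strictly harder than convexity of $g_i$ alone. Your ``slope check'' would have to use more structure (e.g.\ that $g_o$ and $g_i$ are the \emph{same} affine combination of shifts of $v$, differing only in the weights $\Lambda_a$), and as written it is a gap. The paper sidesteps this entirely by citing the prior one-dimensional result, where the threshold structure is established under the small-$\gamma$ assumption rather than via convexity.
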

\begin{proof}
    See Appendix \ref{app:proofs} for proof sketch.
\end{proof}
Note that this result holds for any set of probabilities, and does not require them to be symmetric. The proof for this theorem follows from looking at this two-dimensional grid as a one-dimensional random walk, where all health states in the set $A^{(k)}=\{\hh\mid h^{(x)}+h^{(y)}=k\}$ are considered as state $k$ of the one-dimensional random walk. 

We next study the critical set given by the combination of the above two sets, i.e., when all states on the axes and states in a triangle around the origin are critical health states. Figure \ref{fig:opti-combi} gives the optimal control for this set. The shape of the switching curve is very similar to that in Figure \ref{fig:opti-axes}, but intensive monitoring is optimal for a larger region around the origin. Finally we present the optimal control for the critical health state corresponding to $\max\{h^{(x)}, h^{(y)}\}\leq c$ (Figure \ref{fig:opti-square}). The switching curve in this case is structurally more complex than the previous cases. Even though the structure is complex, different pieces of the switching curve can intuitively be explained based on the time taken to reach a critical state starting from any given state.

\subsection{Asymmetry Between Dimensions}
In the previous subsection, we only considered `symmetric' critical sets and parameters. Now we discuss how the optimal control is affected when the model is asymmetric. 

In Figure \ref{fig:asymmetry-axes}, we plot the optimal control for the two-dimensional model where health states along the axes are critical, and $\lambda_{i,x}-\lambda_{o,x}>\lambda_{i,y}-\lambda_{o,y}$. This implies that, under intensive monitoring, the probability of 
patient's health measurement $h^{(x)}$ improving is higher than that for $h^{(y)}$. In this case, the patient has a high incentive to pay the cost for intensive monitoring even when they are at `mild' risk of reaching the critical state corresponding to $h^{(x)}=0$. On the other hand, the advantage of intensive monitoring for measurement $h^{(y)}$ (in terms of improvement probability) is lower, and hence the patient has an incentive to pay the cost for intensive monitoring only when their $h^{(y)}$ is very low.  

\begin{figure}[h!]
\centering
\begin{subfigure}[t]{0.24\textwidth}
  \centering
  \includegraphics[width=0.9\linewidth]{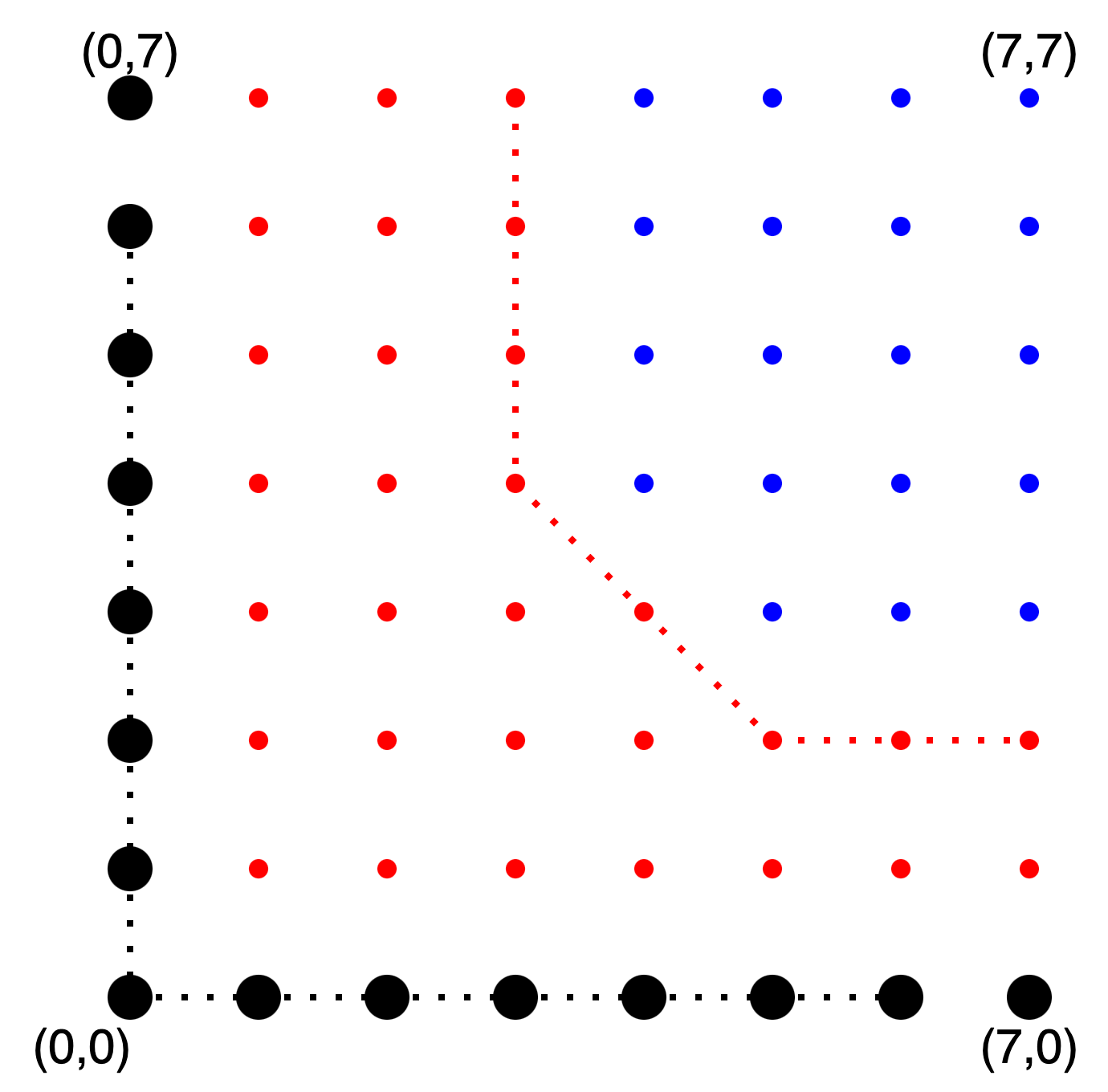}
  \caption{\centering The critical set comprises of the states on x- and y-axis.}
  \label{fig:asymmetry-axes}
\end{subfigure}%
\begin{subfigure}[t]{.245\textwidth}
  \centering
  \includegraphics[width=.9\linewidth]{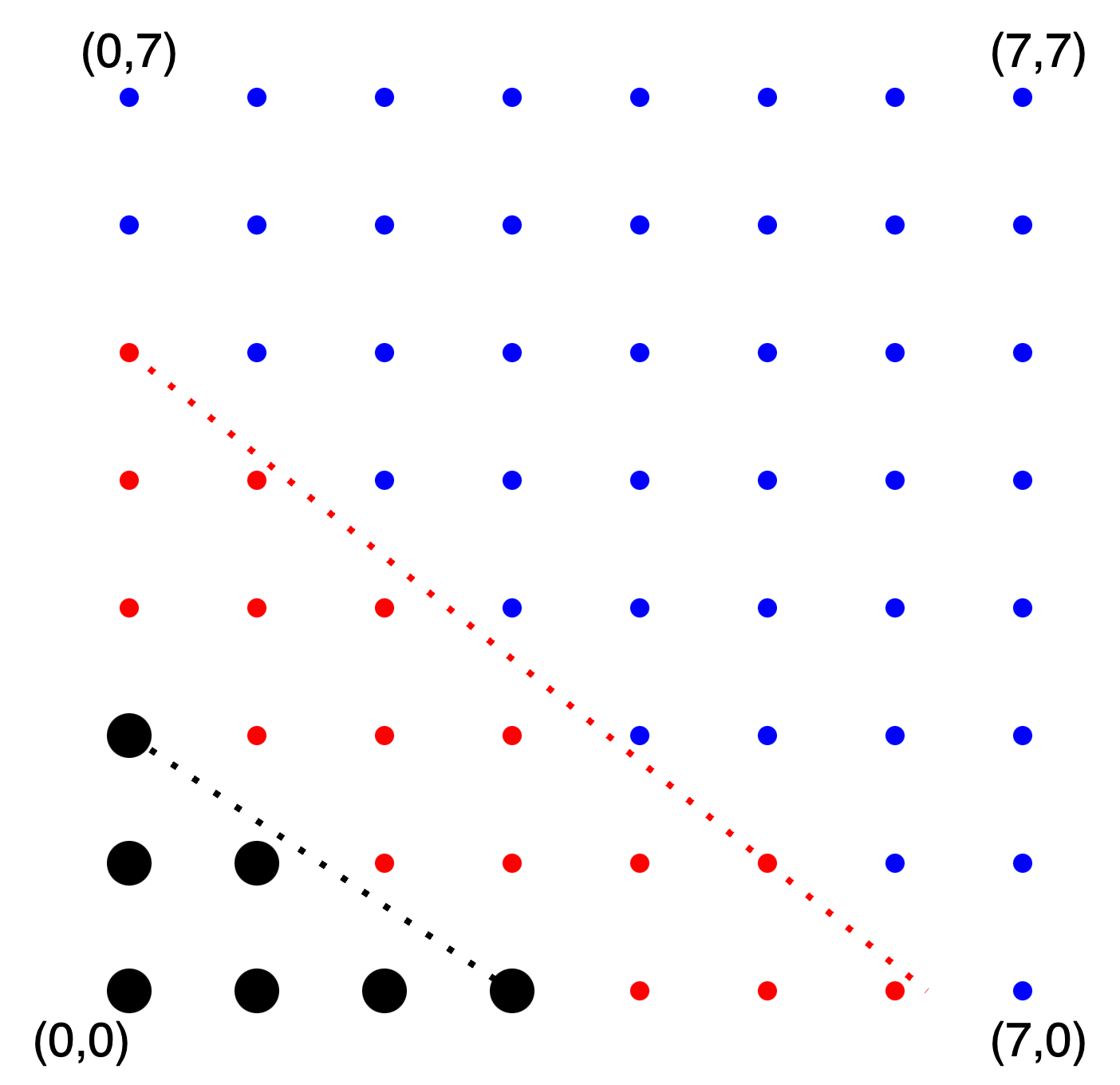}
  \caption{\centering The critical set comprises of states in a triangle around origin}
  \label{fig:asymmetry-triangle}
\end{subfigure}
\caption{Optimal control for \textbf{asymmetric} two-dimensional models for critical sets given by (a) $h^{(x)}=0$ or $h^{(y)}=0$, and (b) $2h^{(x)}+3h^{(y)}\leq 6$. For both plots: $\gamma=0.9, C_i=1, C_o=0, C_c=35$. For plot (a): $\lambda_{o,x}=\lambda_{o,y}=0.5-\mu_{o,x}=0.5-\mu_{o,y}=0.1$ and $\lambda_{i,x}=0.5-\mu_{i,x}=0.3$ and $
\lambda_{i,y}=0.5-\mu_{i,y}=0.25$. For plot (b): $\lambda_{o,x}=\lambda_{o,y}=0.5-\mu_{o,x}=0.5-\mu_{o,y}=0.1$ and $\lambda_{i,x}=\lambda_{i,y}=0.5-\mu_{i,x}=0.5-\mu_{i,y}=0.2$}
\label{fig:asymmetry}
\end{figure} 
The next plot (Figure \ref{fig:asymmetry-triangle}) considers an asymmetric critical set. Here $\Hcal_C=\{\hh\mid 2h^{(x)}+3h^{(y)}\leq 6\}$. In this case, switching curve is $f(\hh)=4h^{(x)}+5h^{(y)}-25=0$. Note that the slope for the switching curve $f(\cdot)$ is not the same as the slope for the critical set function $g(\cdot)$. But we empirically observed that the slopes of functions $g(\cdot)$ and $f(\cdot)$ are close under most cases for a large set of asymmetric critical sets.

\subsection{Variation With Parameters}
Finally, we report the impact of different model parameters on the optimal control. The observations here are very similar to that in the one-dimensional case \cite{chandak2024tiered}. Due to the limited space in this paper, we do not include plots for these results. 

Increasing the discount factor $\gamma$, increasing the cost ratio of $C_c/C_i$ (keeping $C_o$ fixed at zero), or increasing the probabilities $\lambda_{i,x}$ or $\lambda_{i,y}$ (keeping $\lambda_{o,x}$ and $\lambda_{o,y}$ as fixed) have similar effects on the optimal control. Each of these \textit{push} the switching curve away from the origin, with the patient staying under intensive monitoring for a larger set of health sets. In the first two cases, the patient incurs a higher discounted cost on reaching the critical state, and hence the patient has a higher incentive to stay in intensive monitoring. In the last case, the probability of
the patient’s health improving under intensive monitoring improves, incentivizing the patient to stay under intensive monitoring for longer. 

\section{Conclusions}\label{sec:conclusions}
We introduce a multidimensional health state model, extending our prior work \cite{chandak2024tiered} which considered one-dimensional health states. The monitoring service control decides whether to place the patient under ordinary or intensive monitoring, given their health state. Optimal monitoring control is then studied using a dynamic programming approach. Our observations show that the optimal control, is characterized by a \textit{switching curve}, such that patients transition to intensive monitoring when their health is below the switching curve. 

A significant next step involves refinement of our framework using real-world RPM data. In practice, domain experts could define the health states by leveraging existing, clinically relevant metrics (e.g., glucose ranges). Patient-level data collected from remote monitoring devices could be collected in a pilot study to inform the estimation of transition probabilities, while cost parameters may be derived from published literature, actual monetary expenditures in clinics, or expert opinion regarding the burden on patients’ quality of life. Beyond optimizing individual patient care, our model offers insights for healthcare system-level planning, such as estimating clinician staffing needs. Nonetheless, generalizing transition probabilities and cost structures beyond a single clinic or patient population remains a challenge: variations in demographics, comorbidities, or adherence behaviors can limit external validity. Additionally, RPM programs and technologies often evolve rapidly—new devices or improved protocols may alter how patients transition between health states and respond to monitoring. Consequently, periodic re-estimation of model parameters and reassessment of cost structures become critical for maintaining accuracy.

Another important direction would be to tackle the problems of the `curse of dimensionality' and the `curse of non-Markovianity' \cite{non-Markov}. The total number of health states increase exponentially with increase in the dimension of the health states.
This makes solving the dynamic programming equation in \eqref{dp-eqn} intractable when dealing with a large number of health measurements. In such cases, one can instead work with heuristic-based approximations of the optimal policy, or fasten the convergence of the value iteration algorithm by choosing `good' estimates for the value function $V(\cdot)$ as the starting point for the iteration \cite{Dua}. The `curse of non-Markovianity' arises when the observed health measurements are non-Markovian. Our dynamic programming methodology cannot be applied in such cases. This can either be tackled by defining heuristic-based health states which are assumed to satisfy the Markovian property \cite{Schaefer2004}, or by combining the value iteration algorithm with an algorithm which learns Markovian embeddings for the observations \cite{non-Markov}.




\appendices

\section{Mathematical Formulation and Proofs}
\subsection{Transition Probabilities and DP Equation}\label{app:boundary}
We first define the exact transition probabilities for our model. To account for the upper boundary, e.g., $h^{(x)}=H$, we replace the term $h^{(x)}+1$ with $\min\{h^{(x)}+1,H\}$ for our next state. And to account for the lower boundary, e.g., $h^{(x)}=0$, we use $\indi_{h^{(x)}\neq 0}$ and $\indi_{h^{(x)}= 0}$ for transition probabilities in $x$ and $y$-direction, respectively.

\noindent\textbf{1. At critical health states $\hh\in\Hcal_C$ ---} 
\begin{quote}
    No action is taken with the service ceasing operation. A cost of $C_c$ is incurred.
\end{quote}

\noindent \textbf{2. When $\hh\notin\Hcal_C$ and $0\leq h^{(x)},h^{(y)}\leq H$ ---}
\begin{enumerate}[label=(\alph*)]
\item {\em Ordinary Monitoring (m=o), no Switching (a=o):} \\
Starting at state $(o,h^{(x)},h^{(y)})$, the next state with their respective transition probabilities are:
\begin{enumerate}[label=\roman*)]
\item 
$(o,\min\{h^{(x)}+1,H\},h^{(y)})$ w.p. $\lambda_{o,x}$,
\item
$(o,h^{(x)},\min\{h^{(y)}+1,H\})$ w.p. $\lambda_{o,y}$,
\item $(o,h^{(x)}-1,h^{(y)})$ w.p. $\mu_{o,x}\indi_{\{h^{(x)}\neq 0\}}+\mu_{o,y}\indi_{\{h^{(y)}= 0\}}$,
\item $(o,h^{(x)},h^{(y)}-1)$ w.p. $\mu_{o,y}\indi_{\{h^{(y)}\neq 0\}}+\mu_{o,x}\indi_{\{h^{(x)}= 0\}}$.
\end{enumerate}

\item {\em Intensive Monitoring (m=i), no Switching (a=i):}\\
Starting at state $(i,h^{(x)},h^{(y)})$, the next state with their respective transition probabilities are:
\begin{enumerate}[label=\roman*)]
\item 
$(i,\min\{h^{(x)}+1,H\},h^{(y)})$ w.p. $\lambda_{i,x}$,
\item
$(i,h^{(x)},\min\{h^{(y)}+1,H\})$ w.p. $\lambda_{i,y}$,
\item $(i,h^{(x)}-1,h^{(y)})$ w.p. $\mu_{i,x}\indi_{\{h^{(x)}\neq 0\}}+\mu_{i,y}\indi_{\{h^{(y)}= 0\}}$,
\item $(i,h^{(x)},h^{(y)}-1)$ w.p. $\mu_{i,y}\indi_{\{h^{(y)}\neq 0\}}+\mu_{i,x}\indi_{\{h^{(x)}= 0\}}$.
\end{enumerate}

\item {\em Intensive Monitoring (m=i), with Switching (a=o):}\\
Same as part (a).

\item {\em Ordinary Monitoring (m=o), with Switching (a=i):}\\
Same as part (b).
\end{enumerate}

Next, we give the dynamic programming equations satisfied by the optimal control $\Vstar(\cdot,\cdot)$.

\noindent\textbf{1. At critical health states $\hh\in\Hcal_C$ ---} 
\begin{equation*}
\Vstar(i,\hh)=\Vstar(o,\hh)=C_c.
\end{equation*}

\noindent \textbf{2. When $\hh\notin\Hcal_C$ and $0\leq h^{(x)},h^{(y)}\leq H$ ---}
\begin{align*}
    &\Vstar(i,\hh)=\Vstar(o,\hh)\nonumber\\
    &=\min\Bigg\{C_i+\gamma\bigg[\lambda_{i,x} \Vstar\left(i,\min\{h^{(x)}+1,H\},h^{(y)}\right)\nonumber\\
    &+\lambda_{i,y}\Vstar\left(i,h^{(x)},\min\{h^{(y)}+1,H\}\right)\\
    &+\left(\mu_{i,x}\indi_{\{h^{(x)}\neq 0\}}+\mu_{i,y}\indi_{\{h^{(y)}= 0\}}\right) \Vstar\left(i,h^{(x)}-1,h^{(y)}\right)\nonumber\\
    &+\left(\mu_{i,y}\indi_{\{h^{(y)}\neq 0\}}+\mu_{i,x}\indi_{\{h^{(x)}= 0\}}\right)\Vstar\left(i,h^{(x)},h^{(y)}-1\right)\bigg],\nonumber\\
    &C_o+\gamma\bigg[\lambda_{o,x} \Vstar\left(o,\min\{h^{(x)}+1,H\},h^{(y)}\right)\nonumber\\
    &+\lambda_{o,y}\Vstar\left(o,h^{(x)},\min\{h^{(y)}+1,H\}\right)\\
    &+\left(\mu_{o,x}\indi_{\{h^{(x)}\neq 0\}}+\mu_{o,y}\indi_{\{h^{(y)}= 0\}}\right) \Vstar\left(o,h^{(x)}-1,h^{(y)}\right)\nonumber\\
    &+\left(\mu_{o,y}\indi_{\{h^{(y)}\neq 0\}}+\mu_{o,x}\indi_{\{h^{(x)}= 0\}}\right)\Vstar\left(o,h^{(x)},h^{(y)}-1\right)\bigg]\Bigg\}.
\end{align*}

\subsection{Proof for Theorem 1}\label{app:proofs}

\begin{proof}
We work under the asymptotic condition of $H\uparrow\infty$ for this proof. 
Recall that the one-dimensional model in \cite{chandak2024tiered} considered health state $h=0$ as the critical set and defined the parameters $\gamma, \lambda_o$ and  $\lambda_i$. These are the discount factor, and probability of health improving under ordinary and intensive monitoring, respectively. Theorem 1 and 2 from \cite{chandak2024tiered} together show that for sufficiently small $\gamma$, the optimal policy is always a threshold policy, i.e., there exists $\bar{h}$, such that $\pistar(h)=i$ for $h\leq \bar{h}$ and $\pistar(h)=o$ for $h> \bar{h}$. Note that the policy where the control at all states is ordinary monitoring is a special case with $\bar{h}=0$. 

Now in our two-dimensional model, 
consider the sets $A^{(k)}=\{\hh\mid h^{(x)}+h^{(y)}=k\}$. Then $\PP(\hh_{t+1}\in A^{(k+1)}\mid \hh_{t}\in A^{(k)}, m_t=o)=\lambda_{o,x}+\lambda_{o,y}$. Similarly, $\PP(\hh_{t+1}\in A^{(k-1)}\mid \hh_{t}\in A^{(k)}, m_t=o)=\mu_{o,x}+\mu_{o,y}$. Similarly, transitions and respective probabilities are defined for intensive monitoring. Let $\lambda_i'=\lambda_{i,x}+\lambda_{i,y}$ and $\lambda_o'=\lambda_{o,x}+\lambda_{o,y}$. Suppose the set of health sets $A^{(c)}=\{\hh\mid h^{(x)}+h^{(y)}=c\}$ is defined as the health set $h'=0$. Then sets of health states $A^{(k)}$ are given by $h'=k-c$ for $k\geq c$. Then our two-dimensional model can be represented using the one-dimensional model with parameters $\gamma,\lambda'_o,\lambda_i'$ and with health states given by $h'$. Then applying Theorems 1 and 2 from \cite{chandak2024tiered} gives us the result that the optimal control in the one-dimensional case is a threshold policy when $\gamma$ is sufficiently small. Let that threshold in the one-dimensional case be $\bar{h'}$, then the optimal control in the two-dimensional case is $\pithresh$ where $f(\hh)=h^{(x)}+h^{(y)}-(\bar{h'}+c)$. This completes the proof for Theorem 1.
\end{proof}

\end{document}